\newcommand{\bra}[1]{\langle#1|}
\newcommand{\ket}[1]{|#1\rangle}
\newtheorem{theorem}{Theorem}[section]
\newtheorem{proposition}[theorem]{Proposition}
\theoremstyle{remark}
\newtheorem{remark}[theorem]{Remark}
\theoremstyle{definition}
\newtheorem{definition}[theorem]{Definition}
\theoremstyle{example}
\theoremstyle{notation}
\begin{document}
\title{Generalized Bargmann functions, their growth and von Neumann lattices}
\author{A. Vourdas $^{1,3}$, K. A. Penson $^2$, G. H. E. Duchamp $^3$, and A. I. Solomon $^{2,4}$}
\address{$^1$ Department of Computing, University of Bradford, Bradford BD7 1DP, UK\\
$^2$ Laboratoire de Physique Th\'{e}orique de la Mati\`{e}re Condens\'{e}e,
Universit\'{e} Pierre and Marie Curie, CNRS UMR 7600, Tour 13 5-i\`{e}me et., B.C. 121, 4 place Jussieu,
F 75252 Paris Cedex 05, France\\
$^3$ Universit\'{e} Paris 13, LIPN, Institut Galil\'{e}e, CNRS UMR 7030, 99 Av.J.B. Clement,
F 93430 Villetaneuse, France\\
$^4$ The Open University, Physics and Astronomy Department, Milton Keynes MK7 6AA, UK}

\begin{abstract}
Generalized Bargmann representations which are based on generalized coherent states are considered.
The growth of the corresponding analytic functions in the complex plane is studied.
Results about the overcompleteness or undercompleteness of discrete sets of these
generalized coherent states are given.
Several examples are discussed in detail.
\end{abstract}
\maketitle

\section{Introduction}

The Bargmann representation \cite{A1} represents  quantum states by
analytic functions in the complex plane.
This allows the powerful theory of analytic functions to be used in a quantum mechanical context.
An example of a result which can be proved only by the use of the theory of analytic functions
is related to the overcompleteness or undercompleteness of discrete sets of coherent states,
e.g. the von Neumann lattice of coherent states \cite{B1,B2,B3,B4,B5,B6,B7,B8,B9,B10,B11,B12,B13,B14}.
This is based on deep theorems relating the growth of analytic functions and the
density of their zeros\cite{C1,C2,C3}.
The study of the Bargmann representation determines the nature of admissible functions
belonging to the Hilbert space.
Furthermore the study of the paths of the zeros of the Bargmann functions under time evolution
has led to important physical insight for many systems \cite{D1,D2,D3,D4,D5,D6,D7,D8,D9}.

Many generalizations of coherent states \cite{E1,E2,E3} have been considered in the literature.
In the present paper we consider the generalizations studied in \cite{F1,F2,F3,F4,F5,F6,F7}
which have led to many interesting sets of generalized coherent states.
We use them to define generalized Bargmann representations, which represent the various quantum states by
analytic functions in the complex plane.
The requirement of convergence for the scalar product in these representations determines
the maximum growth of the generalized Bargmann functions and defines the corresponding Bargmann spaces.
Theorems that relate the growth of analytic functions in the complex plane to the density
of their zeros lead to results about the overcompleteness or undercompleteness of discrete sets of these
generalized coherent states.
The general theory is applied to four examples: the standard coherent states and three
examples of generalized coherent states.
Therefore we study explicitly three novel types of generalized Bargmann representations.

In section II we briefly review  known results on the growth of analytic functions
in the complex plane, and the relation to the density of their zeros.
In section III we define   generalized coherent states.
In section IV we introduce  generalized Bargmann functions and  study their growth.
We also show that a discrete set of these generalized coherent states
is overcomplete [{\it resp.} undercomplete] if its density is greater [{\it resp.} smaller] than a critical density.
In section V we discuss explicitly four examples.
We conclude in section VI with a discussion of our results.

\section{Growth of analytic functions in the complex plane and the density of their zeros}

Analytic functions are characterized by their growth and the density of their zeros and we
briefly summarize these concepts \cite{C1,C2,C3}.
Let $M(R)$ be the maximum modulus of an analytic function $f(z)$ for $|z|=R$.
Its growth is described by the order $r$ and the type $s$, which are defined as follows:
\begin{eqnarray}
r=\lim _{R\to \infty }\sup \frac {\ln \ln M(R)}{\ln R};\;\;\;\;\;\;
s=\lim _{R\to \infty }\sup \frac {\ln M(R)}{R^r}.
\end{eqnarray}
These definitions imply that  $M(R)\sim \exp(sR^r)$ as $R$ goes to infinity
(here the $\sim$ indicates that $M(R)$ is log-asymptotic to $\exp(sR^r)$).
\begin{definition}
\mbox{}
\begin{itemize}
\item[(1)]
${\mathfrak B}(r,s)$ is the set of analytic functions
in the complex plane with order smaller than $r$, and also functions with
order $r$ and type smaller or equal to $s$.
\item[(2)]
${\mathfrak B}_1(r,s)$ is the set of analytic functions
in the complex plane with order smaller than $r$, and also functions with
order $r$ and type smaller than $s$.
\end{itemize}
\end{definition}
${\mathfrak B}(r,s)-{\mathfrak B}_1(r,s)$ is the set of analytic functions
with order $r$ and type $s$.

We next consider the sequence $\zeta _1,...,\zeta _N,...$ where
$0<|\zeta _1|\le |\zeta _2|\le ....$ and the limit is infinite (as $N$ goes to infinity).
We denote by $n(R)$ the number of terms of this sequence within the circle $|z|<R$.
The density of this sequence is described by the numbers
\begin{eqnarray}
t=\lim _{R\to \infty }\sup \frac {\ln n(R)}{\ln R};\;\;\;\;\;
\overline \delta =\lim _{R\to \infty }\sup \frac {n(R)}{R^t};\;\;\;\;
\underline \delta =\lim _{R\to \infty }\inf \frac {n(R)}{R^t}.
\end{eqnarray}
In the cases considered in this paper the $\lim _{R\to \infty }\frac {n(R)}{R^t}$ exists
and therefore $\overline \delta =\underline \delta $. 
Below we will use the simpler notation $\delta =\overline \delta =\underline \delta $.  
These definitions imply that asymptotically $n(R)\sim \delta R^t$ as $R$ goes to infinity.
\begin{definition}
We say that the density $(t, \delta)$ of a sequence is smaller than $(t_1, \delta_1)$
and we denote this by  $(t, \delta)\prec (t_1, \delta_1)$ if
$t< t_1$ and also if $t=t_1$ and $\delta<\delta_1$ (lexicographic order).
\end{definition}
\begin{remark}\label{rem}
The density depends only on the absolute values of
$\zeta _n$, i.e. the sequences $\zeta _1,...,\zeta _N,...$
and $\zeta _1\exp (i\theta _1),...,\zeta _N \exp(i\theta _N),...$,
where $\theta _N$ are arbitrary real numbers,
have the same density.
Also if  we add or subtract a finite number of complex numbers in a sequence, its
density remains the same.
\end{remark}
An example of a sequence which has density $(t, \delta)$ is
\begin{eqnarray}\label{20}
\zeta _N=\exp \left [\frac{1}{t}\ln \left(\frac{N}{\delta}\right )+i\theta _N\right]
\end{eqnarray}
where $\theta _N$ are arbitrary real numbers.

The relationship between the growth of an analytic function in the complex plane and
the density of its zeros is described through the following inequalities \cite{C1,C2,C3}:
\begin{eqnarray}\label{100}
t\le r;\;\;\;\;\;\;\;sr\ge \delta.
\end{eqnarray}

\section{Generalized coherent states}

Let ${\cal H}$ be the Hilbert space
corresponding to a Hamiltonian operator $h$ and $\ket{n}$ its number eigenstates.
Following \cite{F1,F2,F3,F4,F5,F6,F7}, we consider the generalized coherent states
\begin{eqnarray}
\ket {z;\rho}=[{\cal N}_\rho (|z|^2)]^{-1/2}\sum _{n=0}^\infty \frac {z^n}{\rho (n)^{1/2}}\ket{n};\;\;\;\;\;
{\cal N}_\rho (|z|^2)=\sum _{n=0}^\infty \frac {|z|^{2n}}{\rho (n)}.
\end{eqnarray}
Here $\rho (n)$ is a positive function of $n$ with $\rho(0)=1$.
The series in the normalization constant ${\cal N}_\rho (|z|^2)$ converges within some disc $|z|<R\le \infty$.
{\bf In this paper we only consider cases where $\rho (n)$ is increasing fast enough as a function of $n$,
so that this series converges in the whole complex plane.}

The overlap of two generalized coherent states is
\begin{eqnarray}
\langle z';\rho\ket{z;\rho}=[{\cal N}_\rho (|z'|^2)]^{-1/2}[{\cal N}_\rho (|z|^2)]^{-1/2}{\cal K}_\rho (z'^*,z);\;\;\;\;\;
{\cal K}_\rho (\zeta,z)=\sum _{n=0}^\infty \frac{(\zeta z)^n}{\rho(n)}.
\end{eqnarray}
Here ${\cal K}_\rho (\zeta ,z)$ is the reproducing kernel. Clearly ${\cal K}_\rho (z^*,z)={\cal N}_\rho (|z|^2)$.

The choice $\rho (n)=n!$ is an example, and leads to the standard coherent states.
For $\rho (n)=n!$ the overlap $\langle z';\rho\ket{z;\rho}$ is everywhere 
different from zero.
For other $\rho (n)$ the ovelap might have zeros.

The resolution of the identity in terms of the generalized coherent states is
a weak operator equality given by
\begin{eqnarray}\label{res}
\int _{\mathbb C}d^2z{\widetilde W}_\rho(|z|^2)\ket {z;\rho}\bra {z;\rho}={\bf 1},
\end{eqnarray}
where ${\widetilde W}_\rho (x)>0$ is a function such that
\begin{eqnarray}
\int _0^\infty dx x^n W_\rho (x)=\rho (n);\;\;\;\;\;\;W_\rho (x)=\frac{\pi{\widetilde W}_\rho (x)}
{{\cal N}_\rho (x)}.
\end{eqnarray}
We stress that for arbitrary $\rho(n)$ the ${\widetilde W}_\rho (x)$ does not always exist.

The property of temporal stability states that
if we act with the evolution operator corresponding to a certain Hamiltonian $h$
on the generalized coherent states, we get other coherent states.
Our generalized coherent states have this property:
\begin{eqnarray}
\exp (i\tau h)\ket {z;\rho}=\ket {z\exp (i\tau \omega);\rho}.
\end{eqnarray}
For a study of states having the above property, see\cite{1000}.

\section{Generalized Bargmann functions}

Let $\ket{f}$ be an arbitrary state in the Hilbert space $H$:
\begin{eqnarray}
\ket{f}=\sum_{n=0}^\infty f_n\ket{n};\;\;\;\;\;\sum_{n=0}^\infty |f_n|^2=1.
\end{eqnarray}
We represent this state by the following analytic function in the complex plane:
\begin{eqnarray}\label{50}
F(z;\rho)=[{\cal N}_\rho (|z|^2)]^{1/2}\langle z^*;\rho\ket{f}=\sum_{n=0}^\infty \frac{f_n z^n}{[\rho(n)]^{1/2}}.
\end{eqnarray}
For $\rho (n)=n!$ this is the standard Bargmann function.
But other choices of $\rho(n)$ lead to generalizations of the Bargmann function.

If the resolution of the identity in Eq.(\ref{res}) exists, 
then it leads to the following expression for the scalar product of two
states $\ket{f}$ and $\ket {g}$ represented by the functions $F(z;\rho)$ and $G(z;\rho)$, correspondingly:
\begin{eqnarray}\label{12}
\langle g\ket{f}=(G,F)=\int _{\mathbb C}\frac{d^2z}{\pi}W_\rho(|z|^2)[G(z;\rho)]^*F(z;\rho)
=\sum_{n=0}^\infty g_n^*f_n.
\end{eqnarray}
It is known that a pointwise bound for $F(z;\rho)$ is $|F(z;\rho)|^2\le {\cal K}_\rho (z^*,z)(F,F)$ \cite{Hall}.

As an example, we consider the generalized coherent state $\ket{\zeta ;\rho}$ which is represented by the generalized Bargmann function
\begin{eqnarray}\label{88}
F_{\rm coh}(z;\rho)=[{\cal N}_\rho (|\zeta|^2)]^{-1/2}{\cal K}_\rho (\zeta,z).
\end{eqnarray}

\begin{definition}
The Bargmann space ${\cal B} (W_\rho)$ consists of analytic functions in the complex plane 
$F(z;\rho)$, such that $(F,F)$ is finite, with the scalar product given by Eq.(\ref{12}).
\end{definition}
Let ${\mathfrak S}$ be a set of quantum states in the space $H$.
${\mathfrak S}$ is called  a {\em total} set in
$H$, if there exists no state $\ket{s}$ in $H$
which is orthogonal to all states in ${\mathfrak S}$.

${\mathfrak S}$ is called  {\em undercomplete} in $H$, if there exists a state $\ket{s}$ in $H$
which is orthogonal to all states in ${\mathfrak S}$.

A total set
${\mathfrak S}$ in $H$ is called  {\em overcomplete} in $H$, if
there exist at least one state $\ket{u}$ in ${\mathfrak S}$
such that the ${\mathfrak S}-\{\ket{u}\}$ is also a total set.

A total set ${\mathfrak S}$ in H is called {\em complete} if every proper subset is not a total set. 
Below we will mainly use the terms overcomplete, complete and undercomplete.
\begin{proposition}
\mbox{}
\begin{itemize}
\item[(1)]
If
\begin{eqnarray}
W_\rho (|z|^2)\sim \exp \left [-2{\mathfrak b}(\rho)|z|^{{\mathfrak a}(\rho)}\right ]
\end{eqnarray}
as $|z|\rightarrow \infty$, then the set of functions in the generalized Bargmann space ${\cal B} (W_\rho)$
satisfies the
\begin{eqnarray}
{\mathfrak B}_1[{\mathfrak a}(\rho),{\mathfrak b}(\rho)] \subset {\cal B} (W_\rho) 
\subset {\mathfrak B}[{\mathfrak a}(\rho),{\mathfrak b}(\rho)].
\end{eqnarray}
Functions with order of growth ${\mathfrak a}(\rho)$ and type ${\mathfrak b}(\rho)$, 
might or might not belong to ${\cal B} (W_\rho)$.
\item[(2)]
Let $\{z_N\}$ be a sequence of complex numbers with density $(t, \delta)$.
The set of generalized coherent states $\ket {z_N;\rho}$ is overcomplete [{\it resp.} undercomplete] when
$(t, \delta)\succ [{\mathfrak a}(\rho),{\mathfrak b}(\rho){\mathfrak a}(\rho)]$ 
[{\it resp.} $t<{\mathfrak a}(\rho)$].
\end{itemize}
\end{proposition}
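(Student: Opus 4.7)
The plan is to prove part (1) first, since its two inclusions supply the growth envelope that drives the completeness dichotomy in part (2). Throughout I translate the orthogonality condition $\langle z_N;\rho\ket{f}=0$ into the analytic condition $F(z_N^*;\rho)=0$ on the Bargmann transform $F$, noting that $\{z_N^*\}$ and $\{z_N\}$ have identical density by Remark~\ref{rem}.

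For the upper inclusion $\mathcal{B}(W_\rho)\subset\mathfrak{B}[\mathfrak{a},\mathfrak{b}]$ I invoke the reproducing-kernel pointwise bound $|F(z;\rho)|^2\le\mathcal{N}_\rho(|z|^2)(F,F)$ stated after Eq.~(\ref{12}). The work is in showing $\mathcal{N}_\rho(|z|^2)\sim\exp(2\mathfrak{b}|z|^{\mathfrak{a}})$: a Laplace estimate on $\rho(n)=\int_0^\infty x^n W_\rho(x)\,dx$ with $W_\rho(x)\sim\exp(-2\mathfrak{b}x^{\mathfrak{a}/2})$ produces $\ln\rho(n)\sim(2n/\mathfrak{a})\ln[n/(\mathfrak{b}\mathfrak{a}e)]$, and the standard order/type formulas applied to the series $\sum|z|^{2n}/\rho(n)$ then give the claimed asymptotic, placing $F$ in $\mathfrak{B}[\mathfrak{a},\mathfrak{b}]$. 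The lower inclusion $\mathfrak{B}_1[\mathfrak{a},\mathfrak{b}]\subset\mathcal{B}(W_\rho)$ is direct: by the definition of $\mathfrak{B}_1$ there is $\epsilon>0$ with $|F(z)|^2\le C\exp[2(\mathfrak{b}-\epsilon)|z|^{\mathfrak{a}}]$ for $|z|$ large, so $|F|^2 W_\rho$ decays like $\exp(-2\epsilon|z|^{\mathfrak{a}})$ and its area integral converges.

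For the overcomplete part of (2), assume $(t,\delta)\succ[\mathfrak{a},\mathfrak{b}\mathfrak{a}]$ and suppose $F\in\mathcal{B}(W_\rho)$ vanishes on $\{z_N^*\}$. By part (1), $F$ has order $r\le\mathfrak{a}$, and if $r=\mathfrak{a}$ its type satisfies $s\le\mathfrak{b}$. Since the zero set of $F$ contains $\{z_N^*\}$, its density $(t',\delta')$ satisfies $t'\ge t$ and, when $t'=t$, also $\delta'\ge\delta$. Plugging this into the Hadamard inequalities~(\ref{100}), $t'\le r$ and $sr\ge\delta'$, forces a contradiction in each of the two cases $t>\mathfrak{a}$ and $t=\mathfrak{a},\,\delta>\mathfrak{b}\mathfrak{a}$; therefore $F=0$ and the set of generalized coherent states is total. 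Since removing a single point leaves the density unchanged (Remark~\ref{rem}), the same argument applied to $\{z_N\}_{N\ne M}$ shows the reduced family is also total, which is overcompleteness.

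For the undercomplete part, assume $t<\mathfrak{a}$ and form the canonical Weierstrass product $F$ with zeros $\{z_N^*\}$ and genus $\lfloor t\rfloor$. By Hadamard's theorem, $F$ has order exactly $t$, hence lies in $\mathfrak{B}_1[\mathfrak{a},\mathfrak{b}]$, and by the lower inclusion of part (1) it belongs to $\mathcal{B}(W_\rho)$; the inverse Bargmann transform then produces a nonzero state orthogonal to every $\ket{z_N;\rho}$. I expect the main technical obstacle to be the $\mathcal{N}_\rho$ asymptotic, which requires a careful Laplace/saddle-point analysis (or a moment-problem duality) rather than a bare manipulation; a secondary subtlety is that part (1) only yields the non-strict type bound $s\le\mathfrak{b}$, so the strict hypothesis $\delta>\mathfrak{b}\mathfrak{a}$ in the lexicographic order is precisely what is needed to keep the contradiction tight in the borderline subcase.
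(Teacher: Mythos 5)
Your argument is correct and follows the paper's overall architecture---part (1) supplies the growth envelope, the inequalities of Eq.~(\ref{100}) combined with Remark~\ref{rem} give overcompleteness, and a genus-$\lfloor t\rfloor$ canonical product of order $t<{\mathfrak a}(\rho)$ gives undercompleteness---but your proof of the upper inclusion ${\cal B}(W_\rho)\subset{\mathfrak B}[{\mathfrak a},{\mathfrak b}]$ takes a genuinely different route. The paper simply asserts that the integral in Eq.~(\ref{12}) diverges once the order or type of $F$ exceeds $({\mathfrak a},{\mathfrak b})$; as stated that is only an assertion, since a function of large maximum modulus could in principle be large only on a thin set of directions. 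You instead use the reproducing-kernel pointwise bound $|F(z;\rho)|^2\le{\cal N}_\rho(|z|^2)(F,F)$ together with the asymptotic ${\cal N}_\rho(|z|^2)\sim\exp\left(2{\mathfrak b}|z|^{{\mathfrak a}}\right)$, which you extract from the moment condition $\int x^nW_\rho(x)\,dx=\rho(n)$ by a Laplace estimate and the standard coefficient formulas for order and type. Your route costs the saddle-point computation of $\ln\rho(n)$ (your stated asymptotic is correct), but it buys an airtight contrapositive: $(F,F)<\infty$ forces order $\le{\mathfrak a}$ and, at order ${\mathfrak a}$, type $\le{\mathfrak b}$. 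The remainder of your argument---the $\epsilon$-of-room estimate for the lower inclusion, the two-case contradiction via $t'\le r$ and $sr\ge\delta'$, the stability of the density under deletion of finitely many points, and the canonical product construction---coincides with the paper's.

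One omission: the proposition also asserts that functions with order exactly ${\mathfrak a}(\rho)$ and type exactly ${\mathfrak b}(\rho)$ may or may not belong to ${\cal B}(W_\rho)$, and this clause can only be established by exhibiting examples. The paper takes $W_\rho(|z|^2)=\exp(-2\lambda|z|^2)$ and contrasts $F_1(z)=(\exp(\lambda z^2)-1)/z$, for which the integral converges, with $F_2(z)=z^n\exp(\lambda z^2)$, for which it diverges. Your proposal is silent on this point, so you should supply such a pair to complete part (1).
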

\begin{proof}
\mbox{}
\begin{itemize}
\item[(1)]
The integral in Eq.(\ref{12}) diverges for functions with order of growth greater than ${\mathfrak a}(\rho)$,
or for functions with order equal to ${\mathfrak a}(\rho)$ and type greater than ${\mathfrak b}(\rho)$.
Therefore the functions in the generalized Bargmann space 
${\cal B} (W_\rho)$, belong to the set ${\mathfrak B}[{\mathfrak a}(\rho),{\mathfrak b}(\rho)]$.

Also if the functions have order less than ${\mathfrak a}(\rho)$
or order equal to ${\mathfrak a}(\rho)$ and type less than ${\mathfrak b}(\rho)$, 
the integral in Eq.(\ref{12}) converges.
Therefore functions in the set ${\mathfrak B}_1[{\mathfrak a}(\rho),{\mathfrak b}(\rho)]$ 
belong to the generalized Bargmann space 
${\cal B} (W_\rho)$.

We next show with examples, that functions in 
${\mathfrak B}[{\mathfrak a}(\rho),{\mathfrak b}(\rho)]-{\mathfrak B}_1[{\mathfrak a}(\rho),{\mathfrak b}(\rho)]$
might or might not belong to ${\cal B} (W_\rho)$.
We consider the special case $W_\rho (|z|^2)=\exp(-2\lambda |z|^2)$, i.e., ${\mathfrak a}(\rho)=2$ and
${\mathfrak b}(\rho)=\lambda >0$.
We also consider the functions
\begin{eqnarray}
F_1(z;\rho)=\frac{\exp (\lambda z^2)-1}{z};\;\;\;\;\;\;F_2(z;\rho)=z^n\exp(\lambda z^2).
\end{eqnarray}
They both have growth with order $2$ and type $\lambda$ and belong to the space
${\mathfrak B}(2,\lambda )-{\mathfrak B}_1(2,\lambda)$.
The integral of Eq.(\ref{12}) converges with the first function and diverges with the second function.
Therefore the first function belongs to ${\cal B} (W_\rho)$, and the second function does not belong to it.

\item[(2)]
It follows from Eq.(\ref{50}) that if $F(\zeta;\rho)=0$ then the corresponding state ${\ket f}$
is orthogonal to the generalized coherent state $\ket{\zeta^*; \rho}$.

We consider a set of generalized coherent states $\ket {z_N;\rho}$
with density $(t, \delta)\succ [{\mathfrak a}(\rho),{\mathfrak b}(\rho){\mathfrak a}(\rho)]$.
If this is not a total set then there exists a function $F(z;\rho)$
in the set ${\mathfrak B}[{\mathfrak a}(\rho),{\mathfrak b}(\rho)]$,
which is equal to zero for all $z_N$.
But this is not possible because it violates the relations in Eq.(\ref{100}).
Therefore the set of states $\ket{z_N; \rho}$ is a total set in the space $H$.
The same result is also true if we subtract a finite number of terms from the sequence $\{z_N\}$.
Therefore the set of generalized coherent states $\ket{z_N; \rho}$ is overcomplete.

We next consider  a set of generalized coherent states $\ket {z_N;\rho}$
with density of the corresponding sequence $t<{\mathfrak a}(\rho)$.
In this case we can construct a state which is orthogonal to all $\ket {z_N;\rho}$ \cite{B12,B14}.
We use the Hadamard theorem\cite{C1,C2,C3} and consider the analytic function 
\begin{equation}\label{200}
F(z,\rho)=P(z) \exp [Q_q(z)],
\end{equation}
where
\begin{equation}
P(z)=z^m \prod _{N=1}^{\infty} E(z_N,p),
\end{equation}
\begin{equation}
E(A_N,0)=1-\frac {z}{z_N},
\end{equation}
\begin{equation}
E(A_N,p)=(1-\frac {z}{z_N})\exp \left
[\frac {z}{z_N}+\frac {z^2}{2z_N^2}+...+\frac {z^p}{pz_N^p}\right ].
\end{equation}
Here $E(A_N,p)$ are the Weierstrass factors,
$Q_q(z)$ is a polynomial of degree $q$, and $p$ is an integer. 
The maximum of $(p,q)$ is called the {\em genus} of
$F(z,\rho)$ and does not exceed its order.
$m$ is a non-negative integer and it is the multiplicity of the zero at the origin. 
The $z_N$ are clearly zeros of the $F(z,\rho)$ in
Eq.(\ref{200}). 

It remains to show that for some $Q_q(z)$ this function belongs to the generalized 
Bargmann space ${\cal B}(W_\rho)$.
We take $Q_q(z)=0$ and then the order of the growth of $F(z,\rho)=P(z)$ is equal to $t$ (theorem 7 in p.16 in \cite{C2}).
Therefore if $t<{\mathfrak a}(\rho)$ this function is indeed in the generalized Bargmann space ${\cal B}(W_\rho)$,
and the corresponding state is orthogonal to all
generalized coherent states $\ket{z_N; \rho}$. 
Consequently the set of these coherent states is undercomplete.
\end{itemize}
\end{proof}

\begin{remark}
In the `boundary case'  
where the density of the sequence $\{z_n\}$ has $t={\mathfrak a}(\rho)$
and $\delta\le {\mathfrak b}(\rho){\mathfrak a}(\rho)$, we can not state general results.
We mention some known results for special cases.

For example, when $t={\mathfrak a}(\rho)$ is non-integral, 
we consider the function $F(z; \rho)=P(z)$, as we did above.
This function has growth with order $t$ and its type $s$ satisfies
$s\le C\delta$ where $C$ is a constant that depends on the order $t={\mathfrak a}(\rho)$
(p.32 in \cite{C3}).
In this case for sequences $\{z_n\}$ with $\delta < {\mathfrak b}C^{-1}$ the 
function $F(z; \rho)=P(z)$ has growth less than $({\mathfrak a}(\rho),{\mathfrak b}(\rho))$ 
and it belongs to the Bargmann space.
Therefore  a set of generalized coherent states $\ket {z_N;\rho}$
with density of the corresponding sequence $(t,\delta)$ where $t$ is non-integral,
is undercomplete if $\delta$ is smaller than a critical value.
This result is not valid for integral $t$(p.32 in \cite{C3}).

Another example, is the set of standard coherent states which on a von Neumann lattice with $A=\pi$
is known to be overcomplete by one state \cite{E2}.
If we subtract a finite number of coherent states we get
an undercomplete set of standard coherent states
which is described by the same density.
This example shows two sequences with the same density (in the `boundary case'),
with corresponding coherent states forming an overcomplete and an undercomplete set.
\end{remark}

\section{Examples}

\subsection{$\rho_0(n)=n!$ : standard coherent states}
For the standard coherent states $\rho_0(n)=n!$ we have
\begin{eqnarray}
{\cal N}_{\rho _0}(|z|^2)=\exp \left (|z|^2\right );\;\;\;\;\;\;
W_{\rho _0}(|z|^2)=\exp (-|z|^2).
\end{eqnarray}
Therefore ${\mathfrak a}(\rho _0)=2$ and ${\mathfrak b}(\rho _0)=1/2$.
In this case 
${\mathfrak B}_1(2,1/2) \subset {\cal B} (W_\rho) \subset {\mathfrak B}(2,1/2)$.
The set of coherent states $\ket {z_n;\rho_0}$ is overcomplete or undercomplete in the cases that
the density $(t, \delta)$ of the sequence $\{z_n\}$ is
$(t, \delta)\succ (2,1)$ or
$t<2$, correspondingly.

An example is the rectangular von Neumann lattice $z_{NM}=A(N+iM)$ where $N,M$ are integers and $A$
is the area of each cell.
In this case $n(R)=\pi R^2/A$ and the density is described by $t=2$ and $\delta=\pi/A$.
Our results show that the set of coherent states $\{\ket {z_{NM};\rho _0}\}$ is overcomplete for
$A<\pi$. For this particular example, it is also known\cite{E2} that it is undercomplete for $A>\pi$.

We have explained in remark \ref{rem} that instead of the lattice $z_{NM}=A(N+iM)$
we can also use the complex numbers
$z_{NM}=A(N+iM)\exp (i\theta _{NM})$ with arbitrary phases $\theta _{NM}$.
The angular distribution of the zeros is totally irrelevant.

Another example is the sequence
\begin{eqnarray}
\zeta _N=\exp \left [\frac{1}{t}\ln ({2N})+i\theta _N\right]
\end{eqnarray}
where $\theta _N$ are arbitrary real numbers.
We have seen in Eq.(\ref{20}) that its density is $(t,1/2)$.
Therefore the set of the corresponding coherent states is overcomplete for
$t>2$ and undercomplete for $t<2$.
Also the sequence
\begin{eqnarray}
\zeta _N=\exp \left [\frac{1}{2}\ln \left(\frac{N}{\delta}\right )+i\theta _N\right]
\end{eqnarray}
where $\theta _N$ are arbitrary real numbers, has density $(2, \delta)$.
Therefore the set of the corresponding coherent states is overcomplete for $\delta >1$.

\subsection{$\rho_1(n)=(n!)^2$}\label{AA}
We consider the case that $\rho_1(n)=(n!)^2$. It was proved in \cite{F7} that
\begin{eqnarray}
{\cal N}_{\rho _1}(|z|^2)=I_0(2|z|);\;\;\;\;\;\;W_{\rho _1}(|z|^2)=2K_0(2|z|)
\end{eqnarray}
where $I_0$ and $K_0$ are modified Bessel functions of first and second kind, respectively.
But as $|z|\rightarrow \infty$
\begin{eqnarray}
2K_0(2|z|)\sim \left (\frac{\pi}{|z|}\right )^{1/2}\exp (-2|z|)
\end{eqnarray}
and therefore ${\mathfrak a}(\rho _1)=1$ and ${\mathfrak b}(\rho _1)=1$.
In this case 
${\mathfrak B}_1(1,1) \subset {\cal B} (W_\rho) \subset {\mathfrak B}(1,1)$.

The set of coherent states $\ket {z_N;\rho_1}$ is overcomplete [{\it resp,} undercomplete] when
the density $(t, \delta)$ of the sequence $\{z_N\}$ is
$(t, \delta)\succ (1,1)$
[{\it resp.} $t<1$].

An example is the {\bf one-dimensional} lattice
$z_{N}=\ell N\exp (i\theta _N)$, where $N$ is an integer and $\theta _N$ are arbitrary phases.
In this case $n(R)=2R/\ell$ and the density is described by $t=1$ and $\delta=2/\ell$.
Therefore the set of coherent states $\{\ket {z_{N};\rho _1}\}$ is overcomplete for
$\ell<2$.

\subsection{$\rho_2(n)=\frac{(n!)^3\pi^{1/2}}{2\Gamma(n+3/2)}$}\label{BB}
We consider the case  $\rho_2(n)=\frac{(n!)^3\pi^{1/2}}{2\Gamma(n+3/2)}$, where $\Gamma$ denotes the Gamma function, and for which \cite{F7}
\begin{eqnarray}
{\cal N}_{\rho_2}(|z|^2)=[I_0(|z|)]^2+2|z|I_0(|z|)I_1(|z|);\;\;\;\;\;\;W_{\rho _2}(|z|^2)=[K_0(|z|)]^2
\end{eqnarray}

But as $|z|\rightarrow \infty$
\begin{eqnarray}
[K_0(|z|)]^2\sim \left (\frac{\pi}{2|z|}\right )\exp (-2|z|)
\end{eqnarray}
Therefore ${\mathfrak a}(\rho _2)=1$ and ${\mathfrak b}(\rho _2)=1$.
In this case
${\mathfrak B}_1(1,1) \subset {\cal B} (W_\rho) \subset {\mathfrak B}(1,1)$.

Therefore our conclusions about the overcompleteness or undercompleteness of
the coherent states $\ket {z_N;\rho _2}$ are also valid for the coherent states here.
In connection with this it is interesting to compare the growth of $\rho(n)$ in these two cases,
as $n\;\rightarrow\; \infty$. We use the formula \cite{G1}
\begin{eqnarray}
\lim _{|z|\;\rightarrow\; \infty}\frac {\Gamma(z+a)}{\Gamma (z)}\exp (-a\ln z)=1
\end{eqnarray}
and from this we conclude that as $n\;\rightarrow\; \infty$
\begin{eqnarray}
\frac{\rho _2(n)}{\rho _1(n)}=\frac{\pi^{1/2}}{2}\frac{n!}{\Gamma(n+3/2)}\sim
\frac{\pi^{1/2}}{2}\exp \left (-\frac{1}{2}\ln n\right).
\end{eqnarray}
Therefore the $\rho _2(n)$ considered in this subsection grows more slowly with $n$ than the
$\rho _1(n)$ considered in the previous subsection.

\subsection{$\rho_3(n)=\frac {\Gamma (\alpha n+\beta)}{\Gamma (\beta)}$}\label{CC}
We consider the case
\begin{eqnarray}
\rho _3(n)=\frac {\Gamma (\alpha n+\beta)}{\Gamma (\beta)};\;\;\;\;\;\alpha, \beta > 0.
\end{eqnarray}
It was proved in \cite{F4} that
\begin{eqnarray}
{\cal N}_{\rho _3}(|z|^2 )=\Gamma (\beta)E_{\alpha, \beta}(|z|^2);\;\;\;\;\;\;
W_{\rho _3}(|z|^2)=\frac{|z|^{\frac{2(\beta-\alpha)}{\alpha}}}{\alpha \Gamma(\beta)}
\exp \left (-|z|^{\frac{2}{\alpha}}\right )
\end{eqnarray}
where $E_{\alpha, \beta}(y)$ are generalized Mittag-Leffler functions \cite{F4,G2}.
Therefore ${\mathfrak a}(\rho _3)=\alpha ^{-1}$ and ${\mathfrak b}(\rho _3)=1$.
In this case 
${\mathfrak B}_1(\alpha ^{-1},1) \subset {\cal B} (W_\rho) \subset {\mathfrak B}(\alpha ^{-1},1)$.

The set of coherent states $\ket {z_N;\rho _3}$
is overcomplete or undercomplete in the cases that
the density $(t, \delta)$ of the sequence $\{z_N\}$ is
$(t, \delta)\succ (\alpha ^{-1},\alpha ^{-1})$ or
$t <\alpha ^{-1}$, correspondingly.
For example , the sequence
\begin{eqnarray}
\zeta _N=\exp \left [s\ln (N)+i\theta _N\right]
\end{eqnarray}
where $\theta _N$ are arbitrary real numbers has density $(s^{-1},1)$ and
the set of  corresponding coherent states
is overcomplete [{\it resp.} undercomplete] when $s<\alpha$  or $s=\alpha >1$ [{\it resp.} $s>\alpha$].
Also, the sequence
\begin{eqnarray}\label{20A}
\zeta _N=\exp \left [\alpha \ln \left(\frac{N}{\delta}\right )+i\theta _N\right]
\end{eqnarray}
where $\theta _N$ are arbitrary real numbers has density $(\alpha ^{-1},\delta)$.
Therefore the set of  corresponding coherent states
is overcomplete when  $\delta >\alpha ^{-1}$.

\section{Discussion}

We have used the generalized coherent states studied in \cite{F1,F2,F3,F4,F5,F6,F7}
to define generalized Bargmann analytic representations in the complex plane.
For these states the scalar product in Eq.(\ref{12})  converges and this determines
the growth of the generalized Bargmann functions.
From the growth we infer the maximum density of the zeros of these functions and this in turn determines the overcompleteness or undercompleteness of discrete sets of these
generalized coherent states. In addition to the standard coherent states,
we studied three examples  in detail
in subsections \ref{AA}, \ref{BB} and \ref{CC}. 
Other examples in the same spirit can also be found.

The work provides a deeper insight into the use of the theory of analytic
functions in a quantum mechanical context.

{\bf Acknowledgement}: We thank the Agence Nationale de la Recherche (ANR) for support under the programme
PHYSCOMB No ANR-08-BLAN-0243-2.

\end{document}